\documentclass[submission,copyright,creativecommons]{eptcs}
\usepackage{breakurl}             
\usepackage{graphicx}
\usepackage{amssymb}
\usepackage{amsmath}
\usepackage{url}

\newenvironment{proof}[1][Proof]{\begin{trivlist}
\item[\hskip \labelsep {\bfseries #1}]}{\end{trivlist}}

\newtheorem{lemma}{Lemma}[section]
\newtheorem{theorem}{Theorem}[section]
\newtheorem{proposition}{Proposition}[section]
\newtheorem{definition}{Definition}[section]

\newcommand{\qed}{\nobreak \ifvmode \relax \else
      \ifdim\lastskip<1.5em \hskip-\lastskip
      \hskip1.5em plus0em minus0.5em \fi \nobreak
      \vrule height0.75em width0.5em depth0.25em\fi}


\usepackage{algpseudocode} 
\usepackage{algorithm}

\usepackage[utf8]{inputenc}
\usepackage{imakeidx}

\title{Debugging of Markov Decision Processes (MDPs) Models}
\author{Hichem Debbi
\institute{Department of Computer Science\\
University of M'sila\\
M'sila, Algeria}
\email{hichem.debbi@univ-msila.dz}
}

\begin{document}
\maketitle

\begin{abstract}
	In model checking, a counterexample is considered as a valuable tool for debugging. In Probabilistic Model Checking (PMC), counterexample generation has a quantitative aspect. The counterexample in PMC is a set of paths in which a path formula holds, and their accumulative probability mass violates the probability threshold. However, understanding the counterexample is not an easy task. In this paper we address the task of counterexample analysis for Markov Decision Processes (MDPs). We propose an aided-diagnostic method for probabilistic counterexamples based on the notions of causality, responsibility and blame. Given a counterexample for a Probabilistic CTL (PCTL) formula that does not hold over an MDP model, this method guides the user to the most relevant parts of the model that led to the violation.
\end{abstract}
\section{Introduction}

Probabilistic Model Checking (PMC) has appeared as an extension of model checking for analysing systems that exhibit stochastic behaviour. Several case studies in several domains have been addressed from randomized distributed algorithms and network protocols to biological systems and cloud computing environments. These systems are described usually using Discrete-Time Markov Chain (DTMC), Continuous Time Markov Chain (CTMC) and Markov Decision Process (MDP), and verified against properties specified in Probabilistic Computation Tree Logic (PCTL) \cite{Hansson01} or Continuous Stochastic Logic (CSL) \cite{Aziz02}, \cite{Baier03}.

One of the major advantages of model checking over other formal methods its ability to generate a  counterexample when the model falsifies  such specification. A counterexample is an error trace, by analyzing it, the user can locate the source of the error. Unlike the previous methods proposed for conventional model checking that generate the counterexample as a single path ending with bad states representing the failure, the task in PMC is quite different. The counterexample in PMC is a set of evidences or diagnostic paths that satisfy the path formula and their probability mass violates the probability bound. As it is in conventional model checking, the generated counterexample should be small and most indicative to be easy for understanding . In PMC, this task is more challenging, since the counterexample consists of multiple paths and it is probabilistic. In \cite{Aljazzar12}, the authors introduced a heuristic-based search method for generating counterexamples for DTMCs and CTMCs as what they refer to as diagnostic sub-graphs. In complementary work \cite{Aljazzar13}, they proposed an approach for counterexample generation for MDPs based on existing methods for DTMC. Based on these works, they built an open source tool, DiPro \cite{Aljazzar14}, for generating indicative counterexamples for DTMCs, CTMCs and MDPs. Similar to the previous works, \cite{Han17} has proposed the notion of smallest most indicative counterexample that reduces to the problem of finding K shortest paths. Instead of generating path-based counterexamples, the authors in \cite{Wimmer18} have proposed a novel approach based on critical subsystems. Following this work, \cite{Jansen19} proposed the COMICS tool for generating the critical subsystems that induce the counterexamples. Instead of relying on the state space search resulted from the parallel composition of the modules, \cite{Wimmer21} suggests to rely directly on the guarded command language used by the model checker, which is more likely and helpful for debugging purpose.

Generating small and indicative counterexamples only is not enough for understanding the error. Therefore, in conventional model checking many works have addressed the analysis of counterexamples to better understand the failure \cite{Groce04},\cite{Beer06},\cite{Ball10}. As it was done in conventional model checking, addressing the error explanation in probabilistic model checking is highly required, especially that probabilistic counterexample consists of  multiple paths instead of single path and it is probabilistic. Debbi \cite{Debbi24} has used the definition of causality \cite{Halpern23} for debugging probabilistic models. It has been used with regression analysis in the aim to estimate the causal effect of the Boolean variables on the violation of the probabilistic property, and how these variables change their values dependently, the thing that can help to understand the behavior of the model. The same definition of causality has also been adapted to event orders for generating fault trees from probabilistic counterexamples , where the selection of traces forming the fault tree are restricted to some minimality condition \cite{Leitner-Fischer2013Proba-26504}. To do so, they proposed the event order logic to reason about boolean conditions on the occurrence of events, where the cause of the hazard in their context is presented as a Event Order Logic (EOL) formula, which is a conjunction of events, and the event are simply actions leading from state to another. In  \cite{Fischer26}, they extended their approach by integrating causality in explicit-state model checking algorithm to give a causal interpretation for sub and super-sets of execution traces. They proved the applicability of their approach to many industrial size PROMELA models. They aimed to extended the causality checking approach to probabilistic counterexamples by computing the probabilities of events combination \cite{Leitner125}, but they still consider the use of causality checking of qualitative PROMELA models. They proposed a symbolic version of causality checking \cite{Beer1} based on bounded model checking (BMC) and SAT solving.

In previous work \cite{Debbi22}, we have addressed the analysis of probabilistic counterexamples for DTMCs and CTMCs using the definition of causality by Halpern and Pearl \cite{Halpern23} and its extension, responsibility \cite{Chockler24}. Markov Decision Process (MDP) is a discrete time probabilistic model similar to DTMC, the only difference is that MDP is non-deterministic through the possible actions that can be taken at each state. Hence, for analyzing counterexamples for MDPs, we have to deal with actions as well. To this end, we adapt the definitions of causality and responsibility to reason about causes, in addition we adapt the definition of blame \cite{Chockler24} to reason about actions. All the definitions are adapted in complementary way. Following that, we introduce an algorithm for identifying the relevant parts of the model that contribute the most to the violation of PCTL properties through computing responsibility and blame. We refer to the output of the algorithm as \textit{diagnoses}. So, compared to the work \cite{Debbi22}, this paper does not deal only with the states and the Boolean variables that are considered as causes, but also takes in consideration the transitions leading to them, and under which actions the transitions are enabled. With all these information in hand, we can easily go-back to the model described in guarded command language of PRISM model checker and locate the commands that contribute the most to the violation. In this paper we will focus on probabilistic safety properties with upper threshold. The properties with lower threshold can be easily transformed to properties with upper threshold \cite{Aljazzar12}, \cite{Han17}.

The rest of this paper is organized as follows. In section 2, we present some preliminaries and definitions. In section 3, we show how the notions of causality, responsibility and blame can be adapted for explaining probabilistic counterexamples for MDPs models, following that an algorithm is presented. Experimental results are presented in section 4. At the end, we present conclusion and future works.

\section{Preliminaries and Definitions}

\begin{definition}
	\textbf{(Markov Decision Process (MDP))} is a tuple $M=(S, s_{init}, A, P, L)$, where $S$ is a finite set of states, $s_{init} \in S$ is the initial state, $A$ is a set of actions, $P:S\times A \times S\rightarrow [0,1]$ is a probability transition function such that for every state $s \in S$ and an action $\alpha \in A$ : $\sum_{s' \in S} P(s,\alpha,s') \in \{0,1\}$, and $L: S\rightarrow 2^{AP}$ is a labeling function, which assigns to each state $s\in S$ a subset of the finite set of atomic propositions $AP$.
\end{definition}

At each state $s$, the probability of moving to a successor state $s'$ by taking an action $\alpha$ is given by $P(s,\alpha,s')$. We say that an action $\alpha$ is enabled in state $s$ if and only if $\sum_{s' \in S} P(s,\alpha,s')=1$, otherwise the action $\alpha$ is disabled. For each state $s\in S$  there is at least one action enabled. We denote the set of actions enabled from a state $s$ by $A(s)$. 

An infinite path $\sigma$  is an infinite sequence $\sigma=s_{0}\xrightarrow{\alpha_{0}} s_{1}\xrightarrow{\alpha_{1}}s_{2}...$ with $\alpha_{i} \in A(s_{i})$ such that $P(s_{i},\alpha_{i} ,s_{i+1})$ $>0$ for all $i\geq 0$ . We define the set of infinite paths starting from a state  $s_{0}$ by $Paths(s_{0})$. A finite path is a finite prefix of an infinite path. We denote by $FinitePaths(s_{0})$ the finite paths starting from a state $s_{0}$. For Discrete-time Markov chains (DTMCs), the underlying $\sigma$-algebra is formed by the cylinder sets which are induced by $FinitePaths(s_{0})$. For MDPs, computing the probabilities of paths must rely on the resolution of non-determinism, which is performed by a scheduler. A scheduler $d$ resolves the non-determinism by taking in each state one of the enabled actions $\alpha \in A(s)$, thus resulting in DTMC for which the probability of paths is measurable. We refer to the set of infinite paths under this scheduler as $Paths_{d}(s_{0})$. Then, the underlying $\sigma$-algebra is formed by the cylinder sets which are induced by finite paths under this scheduler denoted $FinitePaths_{d}(s_{0}) $. The probability of this cylinder set is computed by using the following formula:

\begin{equation}
P_{d}(\sigma\in FinitePaths_{d}(s_{0})|\sigma=s_{0}\xrightarrow{\alpha_{0}}s_{1}\xrightarrow{\alpha_{1}}
... \xrightarrow{\alpha_{n-1}} s_{n})=\prod_{i\leq 0 < n}P(s_{i},\alpha_{i},s_{i+1})
\end{equation}

\subsection{Probabilistic Computation Tree Logic (PCTL)}
The Probabilistic Computation Tree Logic (PCTL) has appeared as an extension of CTL for the specification of systems that exhibit stochastic behaviour. PCTL state formulas are formed over the set of atomic propositions $AP$ according to the following grammar:
\begin{flushleft}
	\centering
	$\phi ::=true|a|\neg \phi |{{\phi }_{1}}\wedge {{\phi }_{2}}|{{\textbf{P}}_{\sim p}}(\varphi)$\\
\end{flushleft}
Where $a\in AP$ is an atomic proposition, $\varphi $ is a path formula, $\textbf{P}$ is a probability threshold operator, $\sim \in \{<,\le ,>,\ge \}$ is a comparison operator, and $p$ is a probability threshold. The path formulas $\varphi$ are formed according to the following grammar:
\begin{flushleft}
	\centering
	$\varphi ::= {{\phi }_{1}}\textbf{U}{{\phi }_{2}}|{{\phi }_{1}}\textbf{W}{{\phi }_{2}}|{{\phi }_{1}}{{\textbf{U}}^{\le n}}{{\phi }_{2}}|{{\phi }_{1}}{{\textbf{W}}^{\le n}}{{\phi }_{2}}$\\
\end{flushleft}
Where ${{\phi }_{1}}$ and ${{\phi }_{2}}$ are state formulas and $n\in N$. The PCTL formula is a state formula, where path formulas only occur inside the operator $\textbf{P}$.

The satisfaction of ${{\textbf{P}}_{\sim p}}(\varphi)$ on DTMC depends on the probability mass of set of paths satisfying $\varphi$. This set is considered as a countable union of cylinder sets, so that, its measurability is ensured. A formula ${{\textbf{P}}_{\sim p}}(\varphi)$ is satisfied on an MDP $M$ if only if for every $d\in D$: ${{\textbf{P}_{d}}}(\varphi){\sim p}$, where $D$ represents the set of all schedulers and ${{\textbf{P}_{d}}}(\varphi)$ represents the probability of the set of all finite paths satisfying $\varphi$ under the scheduler $d$.

The semantics of PCTL state and of path formulas for MDPs are defined as the same as for DTMCs  as follows.
\begin{flushleft}
	$s\models true \Leftrightarrow  true$\\
	$s\models a \Leftrightarrow a\in L(s)$\\
	$s\models\neg \phi \Leftrightarrow s\not\models \phi$\\ 
	$s\models{{\phi }_{1}}\wedge {{\phi }_{2}} \Leftrightarrow s\models{{\phi }_{1}}\wedge s\models{{\phi }_{2}}$\\
\end{flushleft}
Given a path $s_{0}\xrightarrow{\alpha_{0}}s_{1}\xrightarrow{\alpha_{1}}...$ and an integer $j\geq 0$, where $\sigma[j]=s_{j}$, the semantics of PCTL path formulas is defined as for CTL as follows:
\begin{flushleft} 
	$\sigma \models{{\phi }_{1}}\textbf{U}{{\phi }_{2}} \Leftrightarrow \exists j\ge 0.\sigma \left[ j \right]\models{{\phi }_{2}}\wedge (\forall 0\le k<j.\sigma \left[ k \right]\models{{\phi }_{1}}) $\\ 
	$\sigma \models{{\phi }_{1}}\textbf{W}{{\phi }_{2}} \Leftrightarrow \sigma \models{{\phi }_{1}}\textbf{U}{{\phi }_{2}} \vee (\forall k\ge 0.\sigma \left[ k \right]\models{{\phi }_{1}})$\\ 
	$\sigma \models{{\phi }_{1}}{{\textbf{U}}^{\le n}}{{\phi }_{2}} \Leftrightarrow \exists 0\le j\le n.\sigma \left[ j \right]\models{{\phi }_{2}}\wedge (\forall 0\le k<j.\sigma \left[ k \right]\models{{\phi }_{1}}) $\\ 
	$\sigma \models{{\phi }_{1}}{{\textbf{W}}^{\le n}}{{\phi }_{2}} \Leftrightarrow \sigma \models{{\phi }_{1}}\textbf{U}^{\le n}{{\phi }_{2}} \vee (\forall 0\le k\le n.\sigma \left[ k \right]\models{{\phi }_{1}})$\\
\end{flushleft}

We should mention that for model checking of MDPs we have to consider  either maximizing or minimizing scheduler. Let $P_{max}(\varphi)$ be the maximal probability of $\varphi$ where $P_{max}(\varphi) = max\{P_{d}(\varphi)| d\in D\}$, and dually the minimal probability $P_{min}(\varphi)$ be the minimal probability of $\varphi$ where $P_{min}(\varphi) = min\{P_{d}(\varphi)| d\in D\}$. For instance for properties of upper threshold , it is evident that $( M \not\models P_{\leq p}(\varphi)) \Leftrightarrow P_{max}(\varphi) > p$.

\subsection{Probabilistic Counterexamples}
The PCTL property $\phi ={{\textbf{P}}_{\le p}}(\varphi)$ is violated on an MDP, if there exists at least one scheduler $d$ such that the probability mass of the paths satisfying $\varphi$ under $d$ exceeds the bound $p$. A probabilistic counterexample for the property $\phi ={{\textbf{P}}_{\le p}}(\varphi)$ can be formed of a set of paths from $FinitePaths_{d}(s_{0})$ starting at state $s_{0}$ and satisfying the path formula $\varphi$ such that $P_{max}(\varphi) > p$ for some scheduler $d$. We denote this set by $FinitePaths_{d}(s_{0}\models\phi)$. These finite paths are also called diagnostic paths \cite{Aljazzar12},\cite{Aljazzar13}.

It is clear that given a scheduler $d$, it is possible to find a set of probabilistic counterexamples under $d$ denoted $PCX_{d}(s_{0}\models\phi )$, which is a set of any combination from  $FinitePaths_{d}(s_{0}\models\phi)$, their probability mass exceeds the bound $p$. Among all these probabilistic counterexamples, we are interested by the most indicative one. A most indicative counterexample is a minimal counterexample (has the least number of paths from $FinitePaths_{d}(s_{0}\models\phi)$) and its probability mass is the highest among all other minimal counterexamples. We denote a most indicative probabilistic counterexample by $MIPCX_{d}({{s}_{0}}\models\phi )$. We should mention that a most indicative probabilistic counterexample may not be unique.

\begin{lemma}\label{lem:Lema1}
	For every path $\sigma \in MIPCX_{d}({{s}_{0}}\models\phi )$ on which the property ${{\phi }_{1}}\textbf{U}{{\phi }_{2}}({{\phi}_{1}}{{\textbf{U}}^{\le n}}{{\phi}_{2}})$ is satisfied, the right state sub-formula ($\phi_2$) is satisfied in the last state of $\sigma$.
\end{lemma}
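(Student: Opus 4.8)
The plan is to argue by contradiction, leaning on the maximality of probability mass that is built into the definition of a most indicative counterexample. First I would fix an arbitrary path $\sigma = s_0 \xrightarrow{\alpha_0} s_1 \xrightarrow{\alpha_1} \dots \xrightarrow{\alpha_{n-1}} s_n$ in $MIPCX_d(s_0 \models \phi)$ on which $\phi_1 \mathbf{U} \phi_2$ holds. By the PCTL path semantics recalled above, there is an index $j$ with $0 \le j \le n$ such that $\sigma[j] \models \phi_2$ while $\sigma[k] \models \phi_1$ for all $0 \le k < j$; I would take $j$ to be the least such index. The statement then reduces to showing $j = n$, i.e.\ that $\phi_2$ cannot hold strictly before the final state.

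Second, I would assume for contradiction that $j < n$ and pass to the proper prefix $\sigma' = s_0 \xrightarrow{\alpha_0} \dots \xrightarrow{\alpha_{j-1}} s_j$. Since $\sigma[j] \models \phi_2$ and every earlier state already satisfies $\phi_1$, the finite path $\sigma'$ itself satisfies $\phi_1 \mathbf{U} \phi_2$, so $\sigma' \in FinitePaths_d(s_0 \models \phi)$. Applying the cylinder probability formula~(1) gives $P_d(\sigma) = P_d(\sigma') \cdot \prod_{i=j}^{n-1} P(s_i,\alpha_i,s_{i+1})$, and because every transition probability lies in $(0,1]$ we obtain $P_d(\sigma') \ge P_d(\sigma)$.

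Third, I would replace $\sigma$ by its prefix $\sigma'$ inside the counterexample. This leaves the number of diagnostic paths unchanged while not decreasing the accumulated probability mass, and strictly increasing it whenever some transition after $s_j$ has probability below $1$. Thus the modified set is again a counterexample with the same (minimal) number of paths but at least as large a probability mass, contradicting the requirement that $MIPCX_d$ attains the highest mass among all minimal counterexamples. Hence $j = n$, so $\phi_2$ holds in the last state of $\sigma$. The bounded variant $\phi_1 \mathbf{U}^{\le n} \phi_2$ follows verbatim once one notes that the least satisfaction index $j$ still respects the step bound, so the truncated prefix remains a legitimate witness.

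The step I expect to be the main obstacle is the degenerate case in which all transitions between $s_j$ and $s_n$ have probability exactly $1$: there the truncation leaves the mass unchanged and the bare ``strictly higher probability'' contradiction evaporates. I would resolve this through the $\sigma$-algebra structure underlying formula~(1): the cylinder induced by $\sigma$ satisfies $\mathrm{Cyl}(\sigma) \subsetneq \mathrm{Cyl}(\sigma')$, so the two events are not disjoint and cannot both be counted as independent additive evidences; making precise that a probability-maximal minimal counterexample must consist of prefix-minimal (first-hitting) witnesses for $\phi_2$ is the delicate point, and it is exactly what pins the satisfaction of $\phi_2$ to the final state.
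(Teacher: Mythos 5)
You should know at the outset that the paper contains no proof of this lemma: it is stated bare, immediately before Example~1, and is in effect a definitional property inherited from the counterexample-generation literature the paper builds on (the ``evidences'' of Han et al.\ and the ``diagnostic paths'' of Aljazzar and Leue), where a finite path only counts as an evidence for an until formula if it is \emph{prefix-minimal}, i.e.\ it stops at the first state satisfying $\phi_2$. So there is no paper argument to match yours against; the question is whether the lemma can be derived, as you attempt, from the paper's stated definitions of counterexample and most indicative counterexample alone. It cannot, and the obstacle you honestly flag in your last paragraph is not a delicate point to be smoothed over --- it is a genuine counterexample to any such derivation. Take $s_0\xrightarrow{\alpha}s_1\xrightarrow{\alpha}s_2$ with $P(s_0,\alpha,s_1)=0.6$, $P(s_1,\alpha,s_2)=1$, where $s_0\models\phi_1$, $s_1\models\phi_2$, $s_2\not\models\phi_2$, and the property $\textbf{P}_{\leq 0.5}(\phi_1\textbf{U}\phi_2)$. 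The singleton set containing the two-step path is a counterexample of minimal cardinality and of maximal probability mass ($0.6$, exactly the mass of its truncation), hence most indicative under the paper's definition, yet $\phi_2$ fails in its last state. Since here $P_d(\sigma')=P_d(\sigma)$, no appeal to mass-maximality, and no appeal to the cylinder-set containment $\mathrm{Cyl}(\sigma)\subsetneq\mathrm{Cyl}(\sigma')$, can exclude this set: both sets are equally good by every criterion the paper states. The conclusion to draw is that prefix-minimality of diagnostic paths must be taken as part of the \emph{definition} (which is what the cited literature does, and presumably why the paper states the lemma without proof), not derived from optimality.

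There is a second gap you do not flag: your replacement step tacitly assumes the truncation $\sigma'$ is not already a member of the counterexample. Because the paper computes the mass of a counterexample as a \emph{sum} of path probabilities (see Example~1), a most indicative counterexample can legitimately contain both a path and its proper prefix, and this breaks your argument even away from the probability-one case. For instance, with $P(s_0,\alpha,s_1)=0.4$, $s_1\models\phi_2$, $P(s_1,\alpha,s_2)=0.75$, $s_2\not\models\phi_2$, a second branch $P(s_0,\alpha,s_3)=0.2$ with $s_3\models\phi_2$, and threshold $\textbf{P}_{\leq 0.55}$, the pair $\{s_0 s_1,\; s_0 s_1 s_2\}$ has summed mass $0.7$ and strictly beats every other two-element set, so it is the unique most indicative counterexample --- and it violates the lemma, while removing or swapping its second path destroys the counterexample property rather than producing a better one. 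So both halves of your strategy (strict-increase replacement, and the tie-breaking you deferred) fail on concrete instances; the missing idea is not a sharper optimality argument but the convention, standard in the literature and implicit in the paper, that diagnostic paths end at the first state where $\phi_2$ holds.
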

\textbf{Example 1}
Let us consider the example of MDP shown in Figure 1 and the property $P_{\leq0.5}(\varphi)$, where $\varphi=(a \vee b) U (c\wedge d )$.

\begin{figure}
	\centering
	\includegraphics[width=3.5 in]{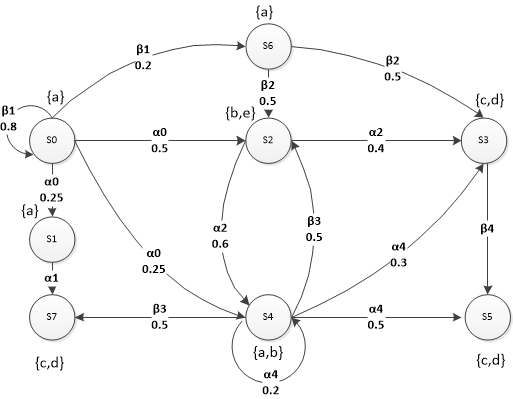}
	\caption{An Example for an MDP}
	\label{fig:MDP}
\end{figure}
This property is violated in this model $( s_{0}\not\models P_{\leq 0.5}(\varphi))$, since there exists a scheduler $d$ (of $\alpha$-actions) that induces a set of finite paths satisfying $\varphi$ and their probability mass is greater than the probability bound (0.5). Any combination from $FinitePaths_{d}(s_{0}\models\phi)$  having probability mass greater than 0.5, is a valid probabilistic counterexample including the whole set. Let us take the following counterexamples:
\begin{flushleft}
	$P(CX_{1})=P(\{s_{0}\xrightarrow{\alpha_{0}}s_{1}\xrightarrow{\alpha_{1}}s_{7}, s_{0}\xrightarrow{\alpha_{0}}s_{2}\xrightarrow{\alpha_{2}}s_{3},s_{0}\xrightarrow{\alpha_{0}}s_{2}\xrightarrow{\alpha_{2}}s_{4}\xrightarrow{\alpha_{4}}s_{3},s_{0}\xrightarrow{\alpha_{0}}s_{2}\xrightarrow{\alpha_{2}}s_{4}\xrightarrow{\alpha_{4}}s_{5},s_{0}\xrightarrow{\alpha_{0}}s_{4}\xrightarrow{\alpha_{4}}s_{5}$
	$\})$\\
	$~~~~~~~~= 0.25 + 0.2 + 0.09 + 0.15 + 0.12 =  0.81$
\end{flushleft}

\begin{flushleft}
	$P(CX_{2})=P(\{{s_{0}\xrightarrow{\alpha_{0}}s_{1}\xrightarrow{\alpha_{1}}s_{7},s_{0}\xrightarrow{\alpha_{0}}s_{2}\xrightarrow{\alpha_{2}}s_{4}\xrightarrow{\alpha_{4}}s_{5},
		s_{0}\xrightarrow{\alpha_{0}}s_{4}\xrightarrow{\alpha_{4}}s_{5}}\})$\\
	$~~~~~~~~= 0.25 + 0.15 + 0.12 = 0.52$
\end{flushleft}
\begin{flushleft}
	$P(CX_{3})=P(\{{s_{0}\xrightarrow{\alpha_{0}}s_{1}\xrightarrow{\alpha_{1}}s_{7},s_{0}\xrightarrow{\alpha_{0}}s_{2}\xrightarrow{\alpha_{2}}s_{3},s_{0}\xrightarrow{\alpha_{0}}s_{2}\xrightarrow{\alpha_{2}}s_{4}\xrightarrow{\alpha_{4}}s_{5}\})}$\\
	$~~~~~~~~= 0.25 + 0.2 + 0.15 =0.60$
\end{flushleft}

The last probabilistic counterexample $CX_{3}$ is most indicative since it is minimal and its probability is greater than the other minimal counterexample $CX_{2}$.

\section{Generating Diagnoses for MDPs}

\subsection{Causality and Responsibility for $MIPCX_{d}({{s}_{0}}\models\phi )$}
As the same as for DTMCs, explaining the violation of PCTL properties of the form $\phi ={{\textbf{P}}_{\leq p}}(\varphi )$ for MDPs reduces to the explanation of exceeding the probability bound over the MDP model. MDP is a discrete time probabilistic model similar to DTMC. Therefore, the definitions of causality and responsibility can be adapted for  $MIPCX_{d}({{s}_{0}}\models\phi )$ in the same way they have been adapted for counterexamples of DTMCs \cite{Debbi22}. 
\begin{definition}
	\textbf{(Causality Model)}
	A causality model for a most indicative probabilistic counterexample  $MIPCX_{d}({{s}_{0}}\models\phi )$ is a tuple $M=(U,V,F)$, where the set $U$ is represented by a context variable; its value $u$ represents a state $s$. $V$ is a set of atomic propositions and Boolean formulas. $F$ associates with every variable $V_{i}\in V$ a truth function $f_{V_{i}}$ that determines the value of  $V_{i}$ (0 or 1), given a state $s$ and the values of other variables in $V$.
\end{definition}
Let us denote by $\widehat{MIPCX_{d}{(s,X\leftarrow x')}({{s}_{0}}\models\phi )}$ the  set of finite paths resulted from  $MIPCX_{d}({{s}_{0}}\models\phi )$ by switching the value of a variable $X\in V$ in a state $s$.
\begin{definition}
	\textbf{(Criticality)}
	Consider a counterexample $MIPCX_{d}({{s}_{0}}\models\phi )$ for a probabilistic formula $\phi ={{\textbf{P}}_{\leq p}}(\varphi )$, a state $s$ in $MIPCX_{d}({{s}_{0}}\models\phi)$ and a variable $X \in V$ that has a value $x \in \{0,1\}$ in $s$. We say that a pair $(s,X=x)$ is critical for the violation of $\phi ={{\textbf{P}}_{\leq p}}(\varphi )$, if  $\widehat{MIPCX_{d}{(s,X\leftarrow x')}({{s}_{0}}\models\phi )}$ is not a valid counterexample for $\phi ={{\textbf{P}}_{\leq p}}(\varphi )$.
\end{definition}

\begin{definition}
	\textbf{(Actual Cause)}
	Consider a counterexample  $MIPCX_{d}({{s}_{0}}\models\phi )$ for a probabilistic formula $\phi ={{\textbf{P}}_{\leq p}}(\varphi )$ and a variable $X \in V$. We say that $(s,X=x)$  is a cause for the violation of $\phi ={{\textbf{P}}_{\leq p}}(\varphi )$, if $(s,X=x)$  is critical , or there exists a subset of variables $W$ of $V$ such that switching their values in $s$ makes $(s,X=x)$ critical. 
\end{definition}

Thus, in our setting, we can refer to a cause as a pair $(s,X=x)$ where $X$  is an atomic proposition has the value 1 in $s$ if $X\in L(s)$, and it has the value 0 otherwise. $(s,X=x)$ is said to be cause, if it is critical, or it can be made critical by switching the values of a set of variables $W$  in $s$.

\begin{definition}
	\textbf{(Responsibility)}
	Consider a counterexample $MIPCX_{d}({{s}_{0}}\models\phi )$ for a probabilistic formula $\phi ={{\textbf{P}}_{\leq p}}(\varphi )$ and a variable $X \in V$. The degree of responsibility of a cause $(s,X=x)$ for the violation of $\phi ={{\textbf{P}}_{\leq p}}(\varphi )$ denoted $dR(s,X=x,\phi)$is 1 if $(s,X=x)$ is critical, and otherwise is $1/(\lvert W\rvert+1)$. 
\end{definition}

That is, we can think of responsibility as a quantitative measure that gives us diagnostic information on $(s,X=x)$ being a cause for the violation of $\phi ={{\textbf{P}}_{\leq p}}(\varphi )$ , where the cause having the highest responsibility for the violation is the critical one.

\begin{definition}
	\textbf{(Probabilistic Causality Model)}
	A probabilistic causality model for $MIPCX_{d}({{s}_{0}}\models\phi )$ is a tuple $(M,Pr)$, where $M$ is the causality model and $Pr$ is a probability function defined over the states of $MIPCX_{d}({{s}_{0}}\models\phi )$. We define for each state $s$ from $MIPCX_{d}({{s}_{0}}\models\phi )$ its probability as follows:
	\begin{equation}
	Pr(s)=\sum_{\sigma\in MIPCX_{d}({{s}_{0}}\models\phi) | s\in \sigma } P(\sigma)
	\end{equation}
\end{definition}

Since every variable in  $V$ is a function of $U$, we can define the probability of each cause $(s,X=x)$ in the same way:
\begin{equation}
Pr(s,X=x)=Pr(s)
\end{equation}

\begin{definition}
	\textbf{(Most Responsible Cause)}
	A cause $C$ is a most responsible cause for the violation of $\phi ={{\textbf{P}}_{\leq p}}(\varphi )$, if $dR(C)Pr(C) \geq dR(C')Pr(C')$ for any cause $C'$.
\end{definition}

\subsection{Blame for $MIPCX_{d}({{s}_{0}}\models\phi )$}

MDPs are non-deterministic Markov models through the possible actions that can be taken at each state, where each action enables a set of transitions. In probabilistic guarded command language of probabilistic model checker like PRISM, transitions represent updates on the variables with probabilities assigned to them. Therefore, for complete analysis of probabilistic counterexamples for MDPs, we have also to deal with actions. As a result, we should ask: how does an action $\alpha$ contribute to the violation of $\phi ={{\textbf{P}}_{\leq p}}(\varphi )$? which leads to the question, which action should be more blamed for exceeding the probability threshold $p$. Hence, the designer would not need just to be aware of probable false assignments, but he also needs to know how such actions are involved. Given this information, the designer could fix the guard commands in way he gets the acceptable outcome. We should mention that blame considers mainly whether an action $\alpha$ is to blame for an outcome $\varphi$ under uncertainty \cite{Chockler24}.

Consider a state $s$ in $MIPCX_{d}({{s}_{0}}\models\phi )$. For each action $\alpha \in A(s)$, we denote by $Suc(s, \alpha)$ the set of $\alpha$-successors of $s$, where $\alpha$-successor is a state $s' \in S$ such that $P(s,\alpha,s')>0$. We associate for each transition from  $s$ to $s'$, where $s'\in Suc(s, \alpha)$ a probability as follow
\begin{equation}
P^{MIPCX_{d}({{s}_{0}}\models\phi )}_{\alpha}(s,s')=\sum_{\sigma\in MIPCX_{d}({{s}_{0}}\models\phi) | (s,\alpha,s') \in \sigma} P(\sigma)
\end{equation}
It is evident that not every transition enabled by an action $\alpha$ has the same presence in the paths forming the counterexample. So this probability measures simply the contribution of a transition to the probability of the counterexample by summing the probabilities of the paths in which it is included. 
\begin{proposition}\label{prop:Prop1}
	Consider a transition $(s,\alpha,s')$ in $MIPCX_{d}({{s}_{0}}\models\phi )$, $Max (P^{MIPCX_{d}({{s}_{0}}\models\phi )}_{\alpha}(s,s'))= Pr(s)$ iff $s'$ is the unique successor of $s$.
\end{proposition}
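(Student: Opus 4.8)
The plan is to prove both directions from a single exact decomposition of $Pr(s)$ into the contributions of the individual $\alpha$-transitions leaving $s$, after which the equivalence reads off immediately.

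First I would record the easy inequality that makes $Pr(s)$ the candidate maximum. Every path $\sigma$ that contains the transition $(s,\alpha,s')$ necessarily contains the state $s$, so the index set in the definition of $P^{MIPCX_{d}}_{\alpha}(s,s')$ is a subset of the one defining $Pr(s)$; since all the summands $P(\sigma)$ are non-negative, $P^{MIPCX_{d}}_{\alpha}(s,s') \le Pr(s)$ holds unconditionally. The content of the proposition is therefore about when this bound is attained.

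The core step is to establish the exact decomposition $Pr(s) = \sum_{s'' \in Suc(s,\alpha)} P^{MIPCX_{d}}_{\alpha}(s,s'')$. For this I would argue that $s$ is an interior (non-terminal) state of every path of the counterexample that passes through it. By Lemma~\ref{lem:Lema1} the last state of every diagnostic path satisfies $\phi_2$, and in a most indicative (hence minimal) counterexample a path terminates at the first $\phi_2$-state, so any state from which a transition is still taken satisfies $\neg\phi_2$; since $(s,\alpha,s')$ occurs in $MIPCX_{d}$, the state $s$ carries an outgoing transition and is thus never terminal. Under the fixed scheduler $d$ the action taken at $s$ is always $\alpha = d(s)$, so each path through $s$ continues through exactly one transition $(s,\alpha,s'')$ with $s'' \in Suc(s,\alpha)$. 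The paths through $s$ therefore partition according to their chosen successor, and summing $P(\sigma)$ over the blocks of this partition yields the displayed equality. With this in hand the equivalence follows: all terms of the sum are non-negative and $P^{MIPCX_{d}}_{\alpha}(s,s') > 0$, so the maximal summand equals the entire sum $Pr(s)$ iff every other summand vanishes. If $s'$ is the unique successor, i.e. $Suc(s,\alpha)=\{s'\}$, the sum has a single term and $Max = Pr(s)$ is immediate; conversely $Max = Pr(s)$ forces the contribution of every $s'' \neq s'$ to be zero, leaving $s'$ as the sole successor reached.

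The main obstacle I anticipate is justifying the non-termination claim in the decomposition step: one must use Lemma~\ref{lem:Lema1} together with minimality of the most indicative counterexample to exclude $s$ being simultaneously interior (because of $(s,\alpha,s')$) and a terminal $\phi_2$-state, since otherwise the decomposition would acquire an extra stopping term $\sum_{\sigma \text{ ends at } s} P(\sigma)$ and the clean identity $\sum_{s''} P^{MIPCX_{d}}_{\alpha}(s,s'') = Pr(s)$ would fail. A secondary delicate point is the reading of \emph{unique successor}: the equivalence is cleanest when uniqueness is understood relative to the successors actually carrying counterexample paths, and I would remark that an $\alpha$-successor of $s$ in $M$ contributing no path of $MIPCX_{d}$ is harmless, as it appears with coefficient $0$ in the decomposition and so does not disturb either direction of the argument.
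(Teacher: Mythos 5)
Your proof is correct, and at its core it rests on the same observation as the paper's: both $Pr(s)$ and $P^{MIPCX_{d}({{s}_{0}}\models\phi )}_{\alpha}(s,s')$ are sums of path probabilities, and the counterexample paths through $s$ distribute among the $\alpha$-successors of $s$. The difference is in how this is exploited. The paper gives a terse counting argument for one direction only (in contrapositive form): if a second successor $s''$ exists, it absorbs at least one of the $N$ paths through $s$, so $(s,\alpha,s')$ lies in at most $N-1$ of them and, path probabilities being positive, $P^{MIPCX_{d}({{s}_{0}}\models\phi )}_{\alpha}(s,s')<Pr(s)$; the ``if'' direction (unique successor $\Rightarrow$ equality) is left implicit. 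You instead prove the exact decomposition $Pr(s)=\sum_{s''\in Suc(s,\alpha)}P^{MIPCX_{d}({{s}_{0}}\models\phi )}_{\alpha}(s,s'')$ and read both directions off it, which buys two things the paper glosses over. First, the decomposition forces you to rule out counterexample paths terminating at $s$ (otherwise $Pr(s)$ would contain a stopping term that no transition sum captures); your appeal to Lemma~\ref{lem:Lema1} together with minimality of the most indicative counterexample supplies exactly the ingredient needed to make the paper's unstated ``if'' direction true. Second, you make explicit the correct reading of ``unique successor'': an $\alpha$-successor in the MDP carrying no counterexample path enters the sum with coefficient $0$, so uniqueness must be understood relative to transitions actually occurring in $MIPCX_{d}({{s}_{0}}\models\phi )$ --- the same reading the paper adopts silently when it asserts that a second successor's transition ``will be included at least in one path.'' One caveat applies to both arguments equally: each tacitly assumes no path visits $s$ twice (a revisiting path could take different $\alpha$-successors on different visits, breaking your partition and the paper's count alike); this is harmless for loop-free diagnostic paths but is not forced by the definitions.
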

\begin{proof}
	$\Pr(s)$ represents the sum of probabilities of paths in which $s$ is included, $P^{MIPCX_{d}({{s}_{0}}\models\phi )}_{\alpha}(s,s'))$ represents the probabilities of the paths in which the transition is included, so it is sufficient to prove that both of them are included in the same set of paths, if only if $s'$ is the unique successor of $s$. Let $s'$ and $s''$ two successors of $s$, if $s$ is included in $N$ paths, the transition $(s,s'')$ will be included at least in one path from this set, and thus $s'$ will be included at most in $N-1$ of paths. Hence, $Max(P^{MIPCX_{d}({{s}_{0}}\models\phi )}_{\alpha}(s,s'))$ will not equal to $Pr(s)$ if there is another successor $s''$ of $s$.
\end{proof}

We should mention that every transition in a probabilistic program describes how the values of the variables evolve over time, and thus considering the transitions and their contribution to the error is very important as debugging information, which makes it a required measure for the definition of blame.
\begin{definition}
	\textbf{(Blame)}
	Consider a counterexample $MIPCX_{d}({{s}_{0}}\models\phi)$ for a probabilistic formula $\phi ={{\textbf{P}}_{\leq p}}(\varphi )$ , a state $s$, an action $\alpha\in A(s)$ and a set of successors $Suc(s, \alpha)$. The degree of blame for an action $\alpha$ for the violation of $\phi ={{\textbf{P}}_{\leq p}}(\varphi )$ in a state $s$ denoted $dB(s,\alpha,\phi)$ is
	\begin{equation}
	\sum_{s' \in Suc(s, \alpha)}dR(s',X=x,\phi) P^{MIPCX_{d}({{s}_{0}}\models\phi )}_{\alpha}(s,s')
	\end{equation} 
\end{definition}
That is, the degree of blame dB informs us about the contribution of an action $\alpha$ in a state $s$ to the violation of the probabilistic formula $\phi ={{\textbf{P}}_{\leq p}}(\varphi )$. While responsibility stands as a criticality measure for actual causes given well-defined states for the violation of $\phi ={{\textbf{P}}_{\leq p}}(\varphi )$, dB describes how an action should be blamed for this violation through considering the probabilities assigned to the transitions leading to these states. So that, the action more blamed for the violation will be the one more contributing to  the probability of $MIPCX_{d}({{s}_{0}}\models\phi )$ and leading to more critical states.

\begin{definition}
	\textbf{(Most Blame)}
	An action $\alpha \in A(s)$ has most blame for the violation of PCTL property $\phi ={{\textbf{P}}_{\leq p}}(\varphi )$, if $dB(s,\alpha,\phi) \geq dB(s',\alpha',\phi)$ for any $s'$ and any $\alpha' \in A(s')$.
\end{definition}


\begin{proposition}\label{prop:Prop2}
	Let $\alpha \in A(s)$ , $Max(dB(s,\alpha,\phi))=P(MIPCX_{d}({{s}_{0}}\models\phi ))$ iff for every $\sigma \in MIPCX_{d}({{s}_{0}}\models\phi )$ there exists $(s,\alpha,s') \in \sigma$, such that $s'$ is critical with respect to such variable $X$.
\end{proposition}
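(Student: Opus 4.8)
The plan is to treat this statement exactly as Proposition~\ref{prop:Prop1} was treated: to show that $P(MIPCX_{d}({{s}_{0}}\models\phi ))$ is an upper bound for $dB(s,\alpha,\phi)$ that is attained precisely under the stated condition. Writing out the definition of blame,
\[
dB(s,\alpha,\phi)=\sum_{s' \in Suc(s, \alpha)}dR(s',X=x,\phi)\, P^{MIPCX_{d}({{s}_{0}}\models\phi )}_{\alpha}(s,s'),
\]
I would read the right-hand side as a weighted sum whose weights are the transition contributions $P^{MIPCX_{d}({{s}_{0}}\models\phi )}_{\alpha}(s,s')$ and whose coefficients are the responsibility values $dR(s',X=x,\phi)$. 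The argument then rests on bounding these two ingredients separately.

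First I would use that responsibility never exceeds $1$, with $dR(s',X=x,\phi)=1$ exactly when $s'$ is critical, which gives
\[
dB(s,\alpha,\phi)\le \sum_{s' \in Suc(s, \alpha)} P^{MIPCX_{d}({{s}_{0}}\models\phi )}_{\alpha}(s,s').
\]
Next, since a diagnostic path leaves $s$ through a single $\alpha$-transition, each path that contains some $(s,\alpha,s')$ contributes its probability to exactly one term of the right-hand sum; hence that sum equals the probability mass of the paths that leave $s$ under $\alpha$, which is at most the total mass $P(MIPCX_{d}({{s}_{0}}\models\phi ))$. Chaining the two bounds gives $dB(s,\alpha,\phi)\le P(MIPCX_{d}({{s}_{0}}\models\phi ))$, justifying the use of $Max$.

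For the equivalence I would determine when both inequalities are tight at once, since $dB(s,\alpha,\phi)=P(MIPCX_{d}({{s}_{0}}\models\phi ))$ holds iff neither bound loses any mass. The second inequality becomes an equality iff every $\sigma\in MIPCX_{d}({{s}_{0}}\models\phi )$ contains a transition $(s,\alpha,s')$, i.e.\ no path avoids leaving $s$ under $\alpha$. The first becomes an equality iff every successor carrying positive weight, those $s'$ with $P^{MIPCX_{d}({{s}_{0}}\models\phi )}_{\alpha}(s,s')>0$, is critical, so that its coefficient is $1$. The conjunction of these two conditions is exactly the claim that every path leaves $s$ through an $\alpha$-transition whose target is critical, and this delivers both directions of the equivalence simultaneously.

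The step I expect to be delicate is the bookkeeping behind ``each path is counted once'', because it silently assumes that a path does not re-enter $s$ and fire $\alpha$ a second time; without this, a single path could feed both a critical and a non-critical successor and the clean tightness analysis would break. For counterexamples of until-formulas this is safe: by Lemma~\ref{lem:Lema1} every diagnostic path stops as soon as $\phi_2$ is reached, so the paths are effectively loop-free and contribute a single $\alpha$-transition out of $s$. I would state this acyclicity assumption explicitly, so that the summation identities underlying both inequalities are valid.
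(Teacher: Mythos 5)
Your overall decomposition---bounding each $dR$ by $1$ and the total transition mass by $P(MIPCX_{d}({{s}_{0}}\models\phi ))$, then asking when both bounds are simultaneously tight---is the natural formalization, and it is worth noting that the paper itself gives no proof of this proposition at all (what follows the statement is only a prose paraphrase), so your attempt supplies an argument where the paper has none, and its skeleton sensibly mirrors the paper's proof of Proposition~\ref{prop:Prop1}. Under the assumption that each path fires $\alpha$ from $s$ at most once, your tightness analysis is correct: equality forces every path to contain some $(s,\alpha,s')$ and every positive-weight successor to be critical, which together are equivalent to the stated condition.

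However, your justification of that key assumption is a genuine gap. Lemma~\ref{lem:Lema1} says only that every diagnostic path ends in a state satisfying $\phi_2$; it does not make paths loop-free. A diagnostic path may cycle through $\phi_1$-states and revisit $s$ before reaching its \emph{first} $\phi_2$-state, e.g.\ $\sigma = s_{0}\xrightarrow{\alpha}s\xrightarrow{\alpha}a\xrightarrow{\alpha}s\xrightarrow{\alpha}b$ with $\phi_2$ holding only at $b$: this path is fully consistent with Lemma~\ref{lem:Lema1}, yet it contains the two distinct transitions $(s,\alpha,a)$ and $(s,\alpha,b)$, so $P(\sigma)$ is counted in both $P^{MIPCX_{d}({{s}_{0}}\models\phi )}_{\alpha}(s,a)$ and $P^{MIPCX_{d}({{s}_{0}}\models\phi )}_{\alpha}(s,b)$. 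If both $a$ and $b$ are critical (which is easy to arrange when $\phi_1$ and $\phi_2$ are atomic), then $dB(s,\alpha,\phi)$ strictly exceeds $P(MIPCX_{d}({{s}_{0}}\models\phi ))$, so the upper bound on which your whole proof rests fails, and with it the reading of $Max$ as ``attained upper bound.'' Your instinct to state acyclicity explicitly is the right repair, but you should present it as an additional hypothesis that neither the proposition nor Lemma~\ref{lem:Lema1} supplies---what the summation identities actually require is that no path of the counterexample contains two $\alpha$-transitions leaving $s$---rather than as something already guaranteed by the structure of until-counterexamples.
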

That is, the maximum degree of blame of an action $\alpha$ in a state $s$ is equal to the probability of the counterexample, if only if for every path $\sigma$ of the counterexample, there exists a transition from $s$ to a state $s'$ under this action, and $s'$ is critical with respect to such variable $X$.
\begin{theorem}
	Let $\alpha \in A(s)$ and $s' \in Suc(s,\alpha)$, $(s',X=x)$ is most responsible cause $\nRightarrow$ $\alpha$ has the most blame.
\end{theorem}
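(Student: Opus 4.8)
The statement is a non-implication, so the plan is to refute it by exhibiting a single MDP together with a valid $MIPCX_d(s_0\models\phi)$ in which the hypothesis holds---some $s'\in Suc(s,\alpha)$ makes $(s',X=x)$ the most responsible cause---while the conclusion fails, i.e. $\alpha$ is \emph{not} the action with most blame. Since the claim as written is universally quantified over all state--action pairs, one concrete counterexample suffices to defeat it.

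Before constructing it, I would isolate the structural reason the implication must fail, which then dictates how to choose the example. The two notions weight probabilities differently. ``Most responsible cause'' ranks $(s',X=x)$ by $dR(s',X=x,\phi)\,Pr(s')$, and by equation (2) the quantity $Pr(s')$ aggregates every path of the counterexample that visits $s'$, irrespective of how it enters $s'$. By contrast, the blame of $\alpha$ in equation (5) weights $dR(s',X=x,\phi)$ by $P^{MIPCX_d(s_0\models\phi)}_{\alpha}(s,s')$, which by equation (4) counts only the paths traversing the specific transition $(s,\alpha,s')$. Whenever $s'$ is reachable by more than one incoming transition of the counterexample, $P_\alpha(s,s')$ is strictly smaller than $Pr(s')$, so the weight attached to the most responsible cause gets split across several actions, none of which need be maximal in blame.

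Concretely, I would build an MDP in which a single critical state $s'$ (so $dR(s',X=x,\phi)=1$) is entered by two transitions lying on paths of the counterexample: one transition $(s,\alpha,s')$ carrying a small probability mass and a second transition $(u,\gamma,s')$ from a different predecessor $u$ carrying a large mass. Then $Pr(s')$ is the sum of the two contributions and can be made large enough that $(s',X=x)$ is the most responsible cause, while $dB(s,\alpha,\phi)$ picks up only the small contribution of $(s,\alpha,s')$ and is dominated by $dB(u,\gamma,\phi)$, so $\alpha$ does not have most blame. I would fix explicit transition probabilities (respecting $\sum_{s'}P(s,\alpha,s')=1$ for each enabled action), verify directly from equations (2)--(5) that $dR(s',X=x,\phi)\,Pr(s')$ is maximal among all causes and that $dB(u,\gamma,\phi)>dB(s,\alpha,\phi)$, and confirm the chosen path set is a genuine $MIPCX_d$, i.e. minimal and of maximal probability.

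The main obstacle I anticipate is the simultaneous bookkeeping: the example must make $(s',X=x)$ dominate every competing cause in $dR\cdot Pr$ while keeping $\alpha$'s blame strictly below that of the competing action $\gamma$, all under a fixed scheduler $d$ and a path set that is certifiably most indicative. Balancing these constraints---especially re-establishing minimality of the counterexample once the extra predecessor $u$ and its paths are introduced---requires choosing the probabilities and the labeling so that removing any path restores the bound, which is the delicate part of the verification. The qualitative argument of the second paragraph, however, guarantees that such a choice exists, since it is exactly the gap between the state-aggregated weight $Pr(s')$ and the transition-specific weight $P_\alpha(s,s')$ that is being exploited.
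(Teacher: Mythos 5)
Your proposal is correct in substance, and it would indeed refute the implication, but it works by a genuinely different mechanism than the paper's own proof. The paper argues on the \emph{successor} side: it assumes, toward a contradiction, that the action $\alpha_1$ leading to the most responsible cause $C$ has the most blame, then constrains the scenario so that $C$ is the \emph{unique} cause $\alpha_1$ leads to and every cause involved is critical. By Proposition~\ref{prop:Prop1}, the blame of $\alpha_1$ is then capped at $Pr(s_1)$, whereas a competing action $\alpha_2$, enabled at a state with $Pr(s_2)>Pr(s_1)$ and leading only to critical causes, accumulates blame up to $Pr(s_2)$ by \emph{summing} over its several successors; the scenario stays consistent with $C$ being most responsible precisely because $\alpha_2$'s mass is split among several causes, each individually scoring below $dR(C)Pr(C)$. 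So the paper exploits the fact that blame is a sum over successors while responsibility ranks causes one at a time. You instead exploit the \emph{predecessor} side: the state $s'$ carrying the most responsible cause owes its large $Pr(s')$ to several incoming transitions, while each action is blamed only for the mass of its own transition $(s,\alpha,s')$, so the credit is diluted across predecessors. Both are valid instances of the same state-versus-transition accounting mismatch, and your construction is realizable; for example, an initial state branching with masses $0.3$ and $0.45$ to two predecessors of a single absorbing goal state, against threshold $0.5$, makes the goal-state cause (weakly) most responsible while the low-mass predecessor's action carries blame only $0.3$. What each approach buys: the paper's argument stays abstract and reuses Proposition~\ref{prop:Prop1}, but it never exhibits a concrete consistent instance, so existence of the posited scenario is asserted rather than demonstrated; your route, once the probabilities are fixed, yields a fully checkable refutation, at the price of the bookkeeping you flag, namely certifying minimality and maximal probability of the path set and checking that no upstream cause dominates $(s',X=x)$ --- in particular a critical cause at the initial state has $Pr$ equal to the whole counterexample mass, so you must either make such causes non-critical or rely on the definition's weak inequality, under which a tie still leaves $(s',X=x)$ a most responsible cause.
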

Proof. Let $s_{1}$ and $s_{2}$ be two states, $\alpha_{1}$ and $\alpha_{2}$ two actions enabled at these states respectively. Let $\alpha_{1}$ leads to a most responsible cause $C$ and we assume that $dB(s_{1},\alpha_{1},\phi) \geq dB(s_{2},\alpha_{2},\phi)$ . Then, for every cause $C'$, $\alpha_{2}$ leads to, $dR(C')Pr(C')< dR(C)Pr(C)$. Let $C$ be the unique cause $\alpha_{1}$ it leads to, let $Pr(s_{2})>Pr(s_{1})$ and let all the causes that $\alpha_{1}$ and $\alpha_{2}$ lead to are critical. With respect to Proposition ~\ref{prop:Prop1}, the probability of the transition leading to $C'$ is $Pr (s_{1})$, and thus with respect to the definition  of blame, $dB(s_{1},\alpha_{1},\phi)$ will be at most $Pr(s_{1})$, which is less than $Pr(s_{2})$, and since the causes that $\alpha_{2}$ leads to are critical, it contradicts $\alpha_{1}$ being the action with the most blame.

\subsection{Algorithm for Generating Diagnoses }
This algorithm performs on counterexamples generated by DiPro \cite{Aljazzar14}. DiPro is a tool used for generating counterexamples from DTMCs, CTMCs and MDPs models, and can be jointly used with the probabilistic model checkers PRISM \cite{Hinton15} and MRMC \cite{Katoen16}.

\begin{algorithm}
	\caption{. Generate Diagnoses}
	\begin{algorithmic}[1]
		\State \textbf{Inputs:} The counterexample $MIPCX_{d}({{s}_{0}}\models\phi )$,
		The probabilistic formula $\phi ={{P}_{\le p}}(\varphi )$ where $\varphi$ is of
		the form ${{\phi }_{1}}\textbf{U}{{\phi }_{2}}$ or $({{\phi }_{1}}{{\textbf{U}}^{\le n}}{{\phi }_{2}})$
		\State \textbf{Outputs:} Causes with responsibilities and probabilities\\
		~~~~~~~~~~~~~ Actions with blame\\
		\State Causes \textbf{:=} $\emptyset$
		\State Actions \textbf{:=} $\emptyset$
		\For {\textbf{each} state $s$ in $MIPCX_{d}({{s}_{0}}\models\phi)$}
		\State W{:=} 0
		\If{$s$ is the last state in a path $\sigma$} 
		\State Causes with dR\textbf{:=} Causes $\cup$ FINDCAUSES($s,\phi_{2}$, $W$)
		\State Pr(Causes)=$\sum_{\sigma\in MIPCX_{d}({{s}_{0}}\models\phi ) | s\in \sigma} P(\sigma)$
		\Else
		\State Causes with dR\textbf{:=} Causes $\cup$ FINDCAUSES($s,\phi_{1}$, $W$)
		\State Pr(Causes)=$\sum_{\sigma\in MIPCX_{d}({{s}_{0}}\models\phi ) | s\in \sigma} P(\sigma)$
		\State $P^{MIPCX_{d}({{s}_{0}}\models\phi )}_{\alpha}(s,s')=\sum_{\sigma\in MIPCX_{d}({{s}_{0}}\models\phi) | (s,\alpha,s') \in \sigma} P(\sigma)$
		\EndIf
		\EndFor
		
		\For {\textbf{each} $s$ in $MIPCX_{d}({{s}_{0}}\models\phi )$}
		\State Actions with dB\textbf{:=} Order(Actions $\cup$ {$\alpha \in A(s)$, $dB(s,\alpha,\phi_{1})=\sum_{s' \in Suc(s, \alpha)}dR(s',X=$\\
			~~~~~~~~~~~~~~~~~~~~~~~~~~$x,\phi_{1}) P^{MIPCX_{d}({{s}_{0}}\models\phi )}_{\alpha}(s,s')$})
		\EndFor
		
		\State \textbf OUTPUTDIAGNOSES(Causes with dR and Pr, Actions with dB)
	\end{algorithmic}
\end{algorithm}

\begin{algorithm}
		\begin{algorithmic}[2]
			\Function{FindCauses}{$s$, $\psi$, $W$}
			\If{$\psi$ is of the form $a$ where $a\in AP$ and $a\in L(s)$}
			\State \textbf{return} $(\langle s,a \rangle, dR(\langle s,a \rangle)=1/(W+1)$)
			\EndIf
			
			\If{$\psi$ is of the form $\neg a$ where $a\in AP$and $a\notin L(s)$}
			\State \textbf{return} $(\langle s,\neg a \rangle, dR(\langle s,\neg a \rangle)=1/(W+1))$
			\EndIf
			
			\If{$\psi$ is of the form ${{\psi}_{1}}\wedge {{\psi}_{2}}$}
			\State \textbf{return} {FindCauses($s$, $\psi_{1}$, $W$) $\cup$ 
				\State FindCauses($s$, $\psi_{2}$, $W$)}
			\EndIf
			
			\If{$\psi$ is of the form ${{\psi}_{1}}\vee {{\psi}_{2}}$}
			\If {$s\models \psi_{1}$ and $s\models \psi_{2}$ }
			\State \textbf{return} {FindCauses($s$, $\psi_{1}$, $W++$) $\cup$
				\State FindCauses($s$, $\psi_{2}$, $W++$)}
			\If{$s\models \psi_{1} \wedge s\not\models \psi_{2}$}
			\State \textbf{return} {FindCauses($s$, $\psi_{1}$, $W$)}
			\EndIf
			\If{$s\not\models \psi_{1} \wedge s\models \psi_{2}$}
			\State \textbf{return} {FindCauses($s$, $\psi_{2}$, $W$)}
			\EndIf
			\EndIf
			\EndIf
			\EndFunction
		\end{algorithmic}
		
	\begin{algorithmic}[3]
		\Function{OutputDiagnoses}{Causes with dR and Pr, Actions with dB}
		
		\For {\textbf{each} action$\alpha \in A(s)$ from Actions}
		\State Output ($\alpha$)
		\For	{\textbf{each} a successor $s' \in Suc(s,\alpha)$}
		\State Output Causes Ordered with respect to $dR\times Pr$
		\State Output $(s,s')$
		\EndFor
		\EndFor
		\EndFunction
	\end{algorithmic}

\end{algorithm}

The algorithm 1 (Generate Diagnoses) explores the counterexample and computes the causes with their responsibilities and probabilities with respect to each state $s$, and computes the degree of blame for each action enabled at this state. The condition put on last state follows Lemma ~\ref{lem:Lema1}. Algorithm 1 uses the function FindCauses that takes a state and a state formula as input as well as the variable $W$, and returns recursively the set of causes and their responsibilities, where the variable $W$ is used to compute responsibility \cite{Debbi22}.

Computing the set of causes exactly in binary causal models is NP-complete in general\cite{Eiter28}. However, \cite{Eiter29} shows that causes can be computed in polynomial time if the causal graph forms a directed causal tree. Our algorithm is linear in the size of  $MIPCX_{d}({{s}_{0}}\models\phi)$ and the size of $\varphi$ as it has been presented in \cite{Debbi22}, because reconfiguring the algorithm to  compute the degree of blame by adding the loop in line 18 does not bring additional complexity since it is directly based on measures already computed, which are the degree of responsibility of each cause and the probability of each transition $P^{MIPCX_{d}({{s}_{0}}\models\phi )}_{\alpha}(s,s')$. Computing $P^{MIPCX_{d}({{s}_{0}}\models\phi )}_{\alpha}(s,s')$ (line 15) is performed under the same loop for computing the probabilities of causes (line 14).

At the end we present the function OUTPUTDIAGNOSES that shows the way of presenting the diagnoses to te user. It gets the actions ordered with respect to dB, and the causes with dR and Pr and outputs the diagnoses in order. We see that the output of the diagnoses starts with the action with the most blame, and among all the causes it leads to, we begin by the most responsible cause, by indicating also the transition leading to this cause. Presenting the transition enabled under this action is very important as a diagnostic information, because transitions in typical probabilistic program describe how the values of the variables evolve over time.

\textbf{Example 2}\\
Let us apply this algorithm on the counterexample $CX3$ from the previous example. The user gets the action $\alpha_{0}$ first, since $dB(s_{0},\alpha_{0},\phi)=0.6$ is the highest, it is equal to the probability of the counterexample. From $Suc(s_{0},\alpha)$, the user gets first the cause $(s_{2},b)$ because it is the most responsible by computing the measure $Pr(s_{2},b)\times dR(s_{2},b)=0.58 > Pr(s_{1},a)\times dR(s_{1},a,\phi)=0.41$. The following action the user gets is $\alpha_{2}$ with degree of blame $dB(s_{2},\alpha_{2},\phi)=0.5\times(0.15)+1\times(0.2)=0.275$ with the causes it led to $\{(s_{3},c),(s_{3},d)\}$ and $\{(s_{4},a)\}$,$\{(s_{4},b)\}$ respectively, then $\alpha_{1}$ with $dB(s_{1},\alpha_{1})=0.25$ with the cause it led to  $\{(s_{7},c),(s_{7},d)\}$, and finally the action $\alpha_{4}$ with $dB(s_{4},\alpha_{4},\phi)=0.15$ with the cause it led to  $\{(s_{5},c),(s_{5},d)\}$.

\section {Experiments}
We have implemented the above method in Java. We used two benchmark case studies to evaluate our method, the Zeroconf protocol \cite{IP29} and CSMA/CD protocol \cite{CSMA31}. All the experiments were carried out on windows XP with Intel Pentium CPU 3.2 GHz speed And 2 GB of memory. We sue DiPro for generating the counterexamples. DiPro employs many algorithms for generating counterexamples, among these algorithms we use the K* algorithm \cite{Aljazzar20112129}. Our method takes the counterexample generated from DiPro and the property to be verified as input, and outputs the diagnoses. 

\subsection{Zeroconf}
The protocol is modeled in PRISM as an MDP, where the number of abstract hosts is denoted by $N$, the number  of probes to send is denoted by $K$, and the probability of message loss is denoted by $loss$. Each station has a single-message buffer and is cyclically attended by the server. The buffer could store the messages that it wants to send, in such cases, messages are not relevant after reconfiguring, and thus keeping these messages can slow down the network and making hosts reconfigure when they do not need to. We therefore considered two different versions of the network: one where the host does not do anything about these messages (No\_Reset) and the another where the host deletes these messages when it decides to choose a new IP address (Reset). 

We chose the property that measures the probability of not choosing a fresh address by time T. This property is given in PRISM as follows:
\begin{flushleft}
	\centering
	$Pmax=?[!(l=4 \wedge ip=2)U (t > T)]$
\end{flushleft}
We test this property using PRISM for both types of network (Reset) and (No\_Reset) for the following values (T = 10; N = 1000;loss = 0.1 and K could vary from 4 to 8). For these values, PRISM renders a probability greater than 0.5.  As a result, we chose the value 0.5 as a threshold.  The property can be rewritten as follows:
\begin{flushleft}
	\centering
	$P\leq 0.5[!(l=4 \wedge ip=2)U (t > T)]$
\end{flushleft}

\begin{table}
	\renewcommand{\arraystretch}{0.5}
	\label{table 1}
	\centering
	\caption{PRISM results for Zeroconf}
	{\begin{tabular}{|c|c|c|c|}
			\hline
			Reset & K & states & transitions\\
			\hline
			true & 4 & 9683 & 15727 \\
			& 6 & 7743 & 11401\\
			& 8 & 7743 & 11401  \\
			\hline
			false & 4 & 59076 & 121265\\
			& 6 & 58937 & 120525  \\
			& 8 & 58937 & 120525  \\
			\hline
		\end{tabular}}
	\end{table}
	
	The PRISM results are shown in Table 1. We notice that the size of the models is very huge with (No\_Reset), comparing to (Reset). Despite that, DiPro renders the same counterexample for all these different configurations with the same execution time. For generating the counterexample, DiPro Explored 24 traces resulting in 121 vertices and 150 edges in 5 seconds for all the configurations. Finally, the counterexample rendered by DiPro consists just of 8 diagnostic paths. 
	
	We pass this counterexample to our algorithm for generating the diagnoses. Our algorithm takes less than 1 second as execution time. The causes generated for this property are as follows: for the right sub-formula, the cause generated is singleton $C0=(t > T)$. For the left sub-formula, the set of causes for not choosing a fresh address: $C1= !(l=4)$ and $C2= !(ip=2)$. Notice that we are facing disjunctive scenario here, which means that either $C1$ (address not in use) is the actual cause or $C2$ (not fresh address) or both of them. Our results show that except the initial state where $ip=1$ (IP address of an abstract host which the concrete host is currently trying to configure), the actual cause for not choosing fresh address within 10 time units is $C1$, which means that we reach states in which there is fresh ip which the concrete host is currently trying to configure but without being used. The number of states from 8 diagnostic paths in which we found theses causes are estimated to be 59 states, this is much less than the states of the model (9683).
	
	Concerning the actions and their blame, in the model we have two main actions causing the non-determinism in such states, which are: 'Reconfigure' denoted $rec$ and 'defend' which is performed by sending an ARP packet and thus this action is denoted in the model by $send$. For the counterexample generated given the previous property, our results show that there is no state in which the host reconfigures, which means that the only action we are dealing with is $send$. As a result, we compute the dB of $send$ action in such states. We found that it has the same degree of blame ($dB=0.25$) in each state it is enabled.
	
	\subsection{CSMA/CD (Carrier Sense, Multiple Access with Collision Detection)}
	CSMA/CD is a protocol for carrier transmission access in Ethernet networks that avoids collision (minimizing simultaneous use of the channel) when Network Interface Card (NIC) tries to send its packet. The protocol is modeled as a probabilistic timed automata (PTA) \cite{CSMA31} and can be reduced to an MDP in order to be analyzed against probabilistic properties by PRISM. The model in PRISM consists of three main components or modules, the two senders namely station 1 and station 2 respectively and the third component is the bus or the medium. The protocol functionality is as follows: if a station has a data to send, it listens first to the medium, in case it is free, the stations send the data, otherwise (bus is busy), it repeats the process after random amount of time. If there is a collision the station attempts to retransmit the packet where the scheduling of the retransmission is determined by a truncated binary exponential backoff process. The complete model is available in the PRISM benchmark suite under MDPs section \cite{CSMA32}. 
	
	We chose the property that estimates the maximum probability of all stations sending successfully before a collision with max backoff. This property is given as follows:
	
	\begin{flushleft}
		\centering
		$Pmax=? [ !"collision\_max\_backoff" U "all\_delivered" ]$
	\end{flushleft}
	
	We tested this property using PRISM for the following values: N=2 and K=2, N=2 and K=4 respectively, where N represents the number of stations and K represents the exponential backoff limit. For these values, PRISM renders a probability greater than 0.7.  As a result, we chose the value 0.7 as a threshold.  The property can be rewritten as follows:
	\begin{flushleft}
		\centering
		$P\leq 0.7[ !"collision\_max\_backoff" U "all\_delivered" ]$
	\end{flushleft}
	
	Where $collision\_max\_backoff$ and $all\_delivered$ are defined as follows:
	$collision\_max\_backoff$\\ $= (cd1=K \& s1=1 \& b=2)|(cd2=K \& s2=1 \& b=2)$, $all\_delivered = (s1=4\&s2=4)$
	. The variables $cd1$ and $cd2$ refer to collision counters for both stations where $K$ represents the backoff limit , $s1$ and $s2$ refer to the state of the stations where $s1=1,s2=1$ indicate that the stations are transmitting data, and finally $b$ refers to the state of the bus where $b=2$ indicates that there is a collision. 
	
	\begin{table}
		\centering
		\caption{PRISM Results}
		{\begin{tabular}{|c|c|c|c|}
				\hline
				K & States & Transitions\\
				\hline
				2 & 1083 & 1282 \\
				4 & 7958 & 10594 \\
				\hline
			\end{tabular}}
			
			\hfill
		\end{table}
		\begin{table}
			\centering
			\caption{DiPro Results}
			{\begin{tabular}{|c|c|c|c|c|}
					\hline
					K & States & Transitions &  Time Construction (Sec) & Diagnostic Paths\\
					\hline
					2 & 1037 & 1276  & 6 sec & 134\\
					4 & 4222 & 5201 & 16 sec & 324\\
					\hline
				\end{tabular}}
			\end{table}
			
			Table 2 shows the size of the model by PRISM. Table 3 shows the states and transition explored while searching for the counterexample and the time required for its construction by DiPro. We notice that DiPro nearly explored all the model to generate the counterexample for K=2, whereas for K=4, DiPro explores nearly half of the model. For k=2, the counterexample generated consists of 134 diagnostic paths, and for k=4 the counterexample generated consists of 324 diagnostic paths. We pass both counterexamples to our algorithm for generating the diagnoses.
			
			\begin{table}
				\renewcommand{\arraystretch}{0.5}
				\label{table_example}
				\centering
				\caption{Execution results of our algorithm}
				{\begin{tabular}{|c|c|c|c|}
						\hline
						K & Causes & Causes Probabilities &  Execution Time (Sec)\\
						\hline
						2 & 616 & 98 & 3 sec\\
						4 & 923 & 47 & 8 sec\\
						\hline
					\end{tabular}
				}
			\end{table}
		
	Table 4 shows the execution results of our algorithm. The second column represents the number of causes generated from the counterexample with respect to the states, while the third column shows the number of classes of causes probabilities. The results show that the time taken for computing the causes is less than the time taken for generating the counterexample.
			
	The causes generated for this property are as follows: For the right sub-formula, the cause generated is a conjunct $C0=(s1=4\&s2=4)$. For the left sub-formula, the set of causes for not facing a collision with max backoff are: $C1=\neg(cd1=K)$, $C2=\neg (s1=1)$, $C3=\neg(b=2)$,$C4= \neg(cd2=K)$
	$C5=\neg(s2=1)$. For both values of K, our results show that there are causes that share the same probability, as we mentioned before that most responsible cause may not be unique. In addition, for both values of K, the most responsible causes are the same. Among all the causes, we found that the most responsible causes for not facing collision are $C1$ and $C4$, where the states in which these causes are found have the highest probability. 
	
	Concerning the actions and their blame, we found that $send1$ and $send2$ have the most blame, where $send1$ and $send2$ lead to the most responsible causes $C1$ and $C4$. For $send1$, the first transition that it enables is represented by the valuations $b=1$ (bus active) and $s1=1$ (station 1 is transmitting). Given these information we will be able to identify the commands that contributed the most to the violation, which are $[send1] (b=0) -> (b'=1)$ (line 35) in the bus module from a side, and from the side of station, the command is $[send1] (s1=0) -> (s1'=1) \& (x1'=0)$ (line 82). For $send2$, the first transition that it enables is represented by the valuations $b=2$ (bus busy - collision) and $s2=1$ (station 2 is transmitting). Given these information we will be able to identify the commands, which are $[send2] (b=1|b=2) \& (y2<sigma) -> (b'=2)$ (line 40) in the bus module from a side, and from the side of station, the command is $[send2] (s2=0) -> (s2'=1) \& (x2'=0)$ (line 82). Then, based on the second transition enabled by $send2$, which is represented by the valuations  $b=1$ (bus active) and $s2=1$ (station 2 is transmitting) and leads to a critical state, the commands to check are $[send2] (b=0) -> (b'=1)$ (line 36) in the bus module, and $[send2] (s2=3) \& (x2=slot) \& (bc2=0) -> (s2'=1) \& (x2'=0)$ (line 99) in the station module respectively. Finally, based on the second transition enabled by $send1$, which is represented by the valuations $b=2$ (bus busy - collision) and $s1=1$ (station 1 is transmitting), the commands to check are: $[send1] (b=1|b=2) \& (y1<sigma) -> (b'=2)$ (line 39)in the bus module, and $[send1] (s1=3) \& (x1=slot) \& (bc1=0) -> (s1'=1) \& (x1'=0)$ (line 99) in the station module respectively. The other states in which $C1$ and $C4$ are not the causes, we find that $C3$ is the most responsible cause. So by defining the transition leading to it, and under which action is enabled, we could also map to the commands related and analyze the rest of the model driven by the diagnoses. So as we see, given the diagnoses generated by our algorithm, it would be easy to go-back to the model, which is given in PRISM guarded command language, and identify the commands that contributed the most to the violation of the probability threshold. By performing some changes on these commands in order, the probability as estimated by PRISM goes below the probability threshold, and thus the probabilistic property will be satisfied.
	
	In general, the results presented here report that our method has a good execution time and thus it can be adapted in practice. Comparing to the execution time taken for generating the counterexample, the execution time of our method is remarkably lower. Concerning the number of diagnoses generated, our method outputs low number of causes comparing to the size of the model, in addition our method outputs the most responsible causes first to the user, which could help the user to find the source of the error rapidly without reading all the causes generated. Along the results, we found that many causes could share the same probability, which means that many causes are found in the same set of paths of the counterexample, this could mean that there is such dependence between variables where the values are changing together. This information is very important for debugging, because it could help user even to locate the line responsible for error in the model itself given the action leading to it. Concerning actions we notice that such actions from the model might be ignored since they do not exist under the scheduler resolving the non-determinism, for instance for Zerconf protocol, we found the only action to be analyzed is \textit{send}, the action \textit{reconfigure} is not concerned, and that means that the search space is limited. It is evident that relaying on small number of actions also facilitates the debugging. Concerning the structure of the probabilistic property to be analyzed, our method has more importance when we face a disjunctive scenario, because in disjunctive scenario we can not be sure about the variables causing the violation, and here where responsibility measure plays the major role. Finally, the commands as identified in the PRISM modules with respect to the diagnoses generated do not necessarily follow the same order in the module, for instance  the command $[send2] (b=1|b=2) \& (y2<sigma) -> (b'=2)$ (line 40) has been reported before the command $[send2] (b=0) -> (b'=1)$ (line 36). We also notice that we were able to deal with the parallel composition of the modules, for instance  $[send1] (b=0) -> (b'=1)$ in the bus module and $[send1] (s1=0) -> (s1'=1) \& (x1'=0)$ in the station module.

	\section{Conclusion and Future Works}
	In this paper we have shown how the notions of causality, responsibility and blame can be useful in the context of probabilistic counterexamples of MDPs. Due to the probabilistic nature of the causality model, we introduced the definition of most responsible cause and the action with most blame. We showed that delivering the causes/actions with respect to their responsibilities/blame stands as a good debugging method that guides the user through large counterexamples, and thus it could help us to identify the commands responsible for the violation of the probabilistic formula.
	
	As future works, we plan to deliver a debugging tool that generates the diagnoses graphically for all kinds of Markov models. Furthermore, we aim to integrate our method in the model checking process itself, in order to locate the commands in PRISM code directly without depending on the counterexample generated.

\bibliographystyle{eptcs}
\bibliography{CrestBib}

\begin{thebibliography}{10}
\providecommand{\bibitemdeclare}[2]{}
\providecommand{\surnamestart}{}
\providecommand{\surnameend}{}
\providecommand{\urlprefix}{Available at }
\providecommand{\url}[1]{\texttt{#1}}
\providecommand{\href}[2]{\texttt{#2}}
\providecommand{\urlalt}[2]{\href{#1}{#2}}
\providecommand{\doi}[1]{doi:\urlalt{http://dx.doi.org/#1}{#1}}
\providecommand{\bibinfo}[2]{#2}

\bibitemdeclare{article}{CSMA31}
\bibitem{CSMA31}
: \emph{\bibinfo{title}{CSMA/CD protocol.
  \url{http://www.prismmodelchecker.org/casestudies/csma.php}}}.

\bibitemdeclare{article}{IP29}
\bibitem{IP29}
: \emph{\bibinfo{title}{IPv4 Zeroconf
  protocol.\url{http://www.prismmodelchecker.org/casestudies/zeroconf.php}}}.

\bibitemdeclare{article}{CSMA32}
\bibitem{CSMA32}
: \emph{\bibinfo{title}{PRISM benchmark suite - Models.
  \url{http://www.prismmodelchecker.org/benchmarks/models.php\#mdps}}}.

\bibitemdeclare{inproceedings}{Aljazzar14}
\bibitem{Aljazzar14}
\bibinfo{author}{H.~\surnamestart Aljazzar\surnameend},
  \bibinfo{author}{F.~\surnamestart Leitner-Fischer\surnameend},
  \bibinfo{author}{S.~\surnamestart Leue\surnameend} \&
  \bibinfo{author}{D.~\surnamestart Simeonov\surnameend}
  (\bibinfo{year}{2011}): \emph{\bibinfo{title}{DiPro - A Tool for
  Probabilistic Counterexample Generation}}.
\newblock In: {\sl \bibinfo{booktitle}{Proceedings of the 18th International
  SPIN Workshop}}, \bibinfo{series}{LNCS 6823}, \bibinfo{publisher}{Springer,
  Berlin, Heidelberg}, pp. \bibinfo{pages}{183--187},
  \doi{10.1007/978-3-642-22306-8\_13}.

\bibitemdeclare{inproceedings}{Aljazzar13}
\bibitem{Aljazzar13}
\bibinfo{author}{H.~\surnamestart Aljazzar\surnameend} \&
  \bibinfo{author}{S.~\surnamestart Leue\surnameend} (\bibinfo{year}{2009}):
  \emph{\bibinfo{title}{Generation of counterexamples for model checking of
  Markov decision processes}}.
\newblock In: {\sl \bibinfo{booktitle}{Proceedings of the International
  Conference on Quantitative Evaluation of Systems (QEST)}}, pp.
  \bibinfo{pages}{197--206}, \doi{10.1109/QEST.2009.10}.

\bibitemdeclare{article}{Aljazzar12}
\bibitem{Aljazzar12}
\bibinfo{author}{H.~\surnamestart Aljazzar\surnameend} \&
  \bibinfo{author}{S.~\surnamestart Leue\surnameend} (\bibinfo{year}{2010}):
  \emph{\bibinfo{title}{Directed explicit state-space search in the generation
  of counterexamples for stochastic model checking}}.
\newblock {\sl \bibinfo{journal}{IEEE Trans. on Software Engineering}}
  \bibinfo{volume}{36}(\bibinfo{number}{1}), pp. \bibinfo{pages}{37--60},
  \doi{10.1109/TSE.2009.57}.

\bibitemdeclare{article}{Aljazzar20112129}
\bibitem{Aljazzar20112129}
\bibinfo{author}{Husain \surnamestart Aljazzar\surnameend} \&
  \bibinfo{author}{Stefan \surnamestart Leue\surnameend}
  (\bibinfo{year}{2011}): \emph{\bibinfo{title}{K*: A heuristic search
  algorithm for finding the k shortest paths}}.
\newblock {\sl \bibinfo{journal}{Artificial Intelligence}}
  \bibinfo{volume}{175}(\bibinfo{number}{18}), pp. \bibinfo{pages}{2129 --
  2154}, \doi{10.1016/j.artint.2011.07.003}.

\bibitemdeclare{article}{Aziz02}
\bibitem{Aziz02}
\bibinfo{author}{A.~\surnamestart Aziz\surnameend},
  \bibinfo{author}{K.~\surnamestart Sanwal\surnameend},
  \bibinfo{author}{V.~\surnamestart Singhal\surnameend} \&
  \bibinfo{author}{R.~\surnamestart Brayton\surnameend} (\bibinfo{year}{2000}):
  \emph{\bibinfo{title}{Model-checking continuous-time Markov chains}}.
\newblock {\sl \bibinfo{journal}{ACM Transactions on Computational Logic}}
  \bibinfo{volume}{1}(\bibinfo{number}{1}), pp. \bibinfo{pages}{162--170},
  \doi{10.1145/343369.343402}.

\bibitemdeclare{article}{Baier03}
\bibitem{Baier03}
\bibinfo{author}{C.~\surnamestart Baier\surnameend},
  \bibinfo{author}{B.~\surnamestart Haverkort\surnameend},
  \bibinfo{author}{H.~\surnamestart Hermanns\surnameend} \&
  \bibinfo{author}{J.-P \surnamestart Katoen\surnameend}
  (\bibinfo{year}{2003}): \emph{\bibinfo{title}{Model checking algorithms for
  continuous-time Markov chains}}.
\newblock {\sl \bibinfo{journal}{IEEE Transactions on Software Engineering}}
  \bibinfo{volume}{29}(\bibinfo{number}{7}), pp. \bibinfo{pages}{524--541},
  \doi{10.1109/TSE.2003.1205180}.

\bibitemdeclare{inproceedings}{Ball10}
\bibitem{Ball10}
\bibinfo{author}{T.~\surnamestart Ball\surnameend},
  \bibinfo{author}{M.~\surnamestart Naik\surnameend} \& \bibinfo{author}{S.K.
  \surnamestart Rajamani\surnameend} (\bibinfo{year}{2003}):
  \emph{\bibinfo{title}{From symptom to cause: Localizing errors in
  counterexample traces}}.
\newblock In: {\sl \bibinfo{booktitle}{Proceedings of ACM Symposium on the
  Principles of Programming Languages}}, pp. \bibinfo{pages}{97--105},
  \doi{10.1145/640128.604140}.

\bibitemdeclare{inproceedings}{Beer1}
\bibitem{Beer1}
\bibinfo{author}{A.~\surnamestart Beer\surnameend},
  \bibinfo{author}{S.~\surnamestart Heidinger\surnameend},
  \bibinfo{author}{U.~\surnamestart Kuhne\surnameend},
  \bibinfo{author}{F.~\surnamestart Leitner-Fischer\surnameend} \&
  \bibinfo{author}{S.~\surnamestart Leue\surnameend} (\bibinfo{year}{2015}):
  \emph{\bibinfo{title}{Symbolic Causality Checking Using Bounded Model
  Checking}}.
\newblock In: {\sl \bibinfo{booktitle}{Proceedings of SPIN 2015}},
  \bibinfo{series}{LNCS 9232}, \bibinfo{publisher}{Springer-Verlag, Berlin,
  Heidelberg}, pp. \bibinfo{pages}{203--221},
  \doi{10.1007/978-3-319-23404-5\_14}.

\bibitemdeclare{article}{Chockler24}
\bibitem{Chockler24}
\bibinfo{author}{H.~\surnamestart Chockler\surnameend} \&
  \bibinfo{author}{J.~Y. \surnamestart Halpern\surnameend}
  (\bibinfo{year}{2004}): \emph{\bibinfo{title}{Responsibility and blame: a
  structural model approach}}.
\newblock {\sl \bibinfo{journal}{Journal of Artificial Intelligence Research
  (JAIR)}} \bibinfo{volume}{22}(\bibinfo{number}{1}), pp.
  \bibinfo{pages}{93--115}, \doi{10.1613/jair.1391}.

\bibitemdeclare{inproceedings}{Debbi24}
\bibitem{Debbi24}
\bibinfo{author}{H.~\surnamestart Debbi\surnameend} (\bibinfo{year}{2014}):
  \emph{\bibinfo{title}{Diagnosis of Probabilistic Models using Causality and
  Regression}}.
\newblock In: {\sl \bibinfo{booktitle}{Proceedings of the 8th International
  Workshop on Verification and Evaluation of Computer and Communication Systems
  (VECoS 2014)}}, pp. \bibinfo{pages}{33--44}.

\bibitemdeclare{inproceedings}{Debbi22}
\bibitem{Debbi22}
\bibinfo{author}{H.~\surnamestart Debbi\surnameend} \&
  \bibinfo{author}{M.~\surnamestart Bourahla\surnameend}
  (\bibinfo{year}{2013}): \emph{\bibinfo{title}{Causal Analysis of
  Probabilistic Counterexamples}}.
\newblock In: {\sl \bibinfo{booktitle}{Proceedings of the Eleventh ACM-IEEE
  International Conference on Formal Methods and Models for Codesign
  (Memocode)}}, pp. \bibinfo{pages}{77--86}.

\bibitemdeclare{article}{Eiter28}
\bibitem{Eiter28}
\bibinfo{author}{T.~\surnamestart Eiter\surnameend} \&
  \bibinfo{author}{T.~\surnamestart Lukasiewicz\surnameend}
  (\bibinfo{year}{2002}): \emph{\bibinfo{title}{Complexity results for
  structure-based causality}}.
\newblock {\sl \bibinfo{journal}{Artificial Intelligence}}
  \bibinfo{volume}{142}(\bibinfo{number}{1}), p.~\bibinfo{pages}{53},
  \doi{10.1016/S0004-3702(02)00271-0}.

\bibitemdeclare{article}{Eiter29}
\bibitem{Eiter29}
\bibinfo{author}{T.~\surnamestart Eiter\surnameend} \&
  \bibinfo{author}{T.~\surnamestart Lukasiewicz\surnameend}
  (\bibinfo{year}{2006}): \emph{\bibinfo{title}{Causes and explanations in the
  structural-model approach: Tractable cases}}.
\newblock {\sl \bibinfo{journal}{Artificial Intelligence}}
  \bibinfo{volume}{170}(\bibinfo{number}{6--7}), pp. \bibinfo{pages}{542--580},
  \doi{10.1016/j.artint.2005.12.003}.

\bibitemdeclare{inproceedings}{Fischer26}
\bibitem{Fischer26}
\bibinfo{author}{F.~\surnamestart Fischer\surnameend} \&
  \bibinfo{author}{S.~\surnamestart Leue\surnameend} (\bibinfo{year}{2013}):
  \emph{\bibinfo{title}{Causality Checking for Complex System Models}}.
\newblock In: {\sl \bibinfo{booktitle}{Proceedings of Verification, Model
  Checking, and Abstract Interpretation (VMCAI)}}, \bibinfo{series}{LNCS 7737},
  \bibinfo{publisher}{Springer, Berlin, Heidelberg}, pp.
  \bibinfo{pages}{248--276}, \doi{10.1007/978-3-642-35873-9\_16}.

\bibitemdeclare{inproceedings}{Groce04}
\bibitem{Groce04}
\bibinfo{author}{A.~\surnamestart Groce\surnameend} (\bibinfo{year}{2004}):
  \emph{\bibinfo{title}{Error explanation with distance metrics}}.
\newblock In: {\sl \bibinfo{booktitle}{Proceedings of Conference on Tools and
  Algorithms for Construction and Analysis of Systems (TACAS)}}, pp.
  \bibinfo{pages}{108--122}, \doi{10.1007/s10009-005-0202-0}.

\bibitemdeclare{inproceedings}{Halpern23}
\bibitem{Halpern23}
\bibinfo{author}{J.~\surnamestart Halpern\surnameend} \&
  \bibinfo{author}{J.~\surnamestart Pearl\surnameend} (\bibinfo{year}{2001}):
  \emph{\bibinfo{title}{Causes and explanations: A structural-model approach
  part I: Causes}}.
\newblock In: {\sl \bibinfo{booktitle}{Proceedings of the 17th UAI}}, pp.
  \bibinfo{pages}{194--202}, \doi{10.1093/bjps/axi147}.

\bibitemdeclare{article}{Han17}
\bibitem{Han17}
\bibinfo{author}{T.~\surnamestart Han\surnameend} \& \bibinfo{author}{J.P.
  \surnamestart Katoen\surnameend} (\bibinfo{year}{2009}):
  \emph{\bibinfo{title}{Counterexamples Generation in probabilistic model
  checking}}.
\newblock {\sl \bibinfo{journal}{IEEE Trans. on Software Engineering}}
  \bibinfo{volume}{35}(\bibinfo{number}{2}), pp. \bibinfo{pages}{72--86},
  \doi{10.1109/TSE.2009.5}.

\bibitemdeclare{article}{Hansson01}
\bibitem{Hansson01}
\bibinfo{author}{H.~\surnamestart Hansson\surnameend} \&
  \bibinfo{author}{B.~\surnamestart Jonsson\surnameend} (\bibinfo{year}{1994}):
  \emph{\bibinfo{title}{Logic for reasoning about time and reliability}}.
\newblock {\sl \bibinfo{journal}{Formal aspects of Computing}}
  \bibinfo{volume}{6}(\bibinfo{number}{5}), pp. \bibinfo{pages}{512--535},
  \doi{10.1007/BF01211866}.

\bibitemdeclare{inproceedings}{Hinton15}
\bibitem{Hinton15}
\bibinfo{author}{A.~\surnamestart Hinton\surnameend},
  \bibinfo{author}{M.~\surnamestart Kwiatkowska\surnameend},
  \bibinfo{author}{G.~\surnamestart Norman\surnameend} \&
  \bibinfo{author}{D.~\surnamestart Parker\surnameend} (\bibinfo{year}{2006}):
  \emph{\bibinfo{title}{PRISM: A tool for automatic verification of
  probabilistic systems}}.
\newblock In: {\sl \bibinfo{booktitle}{Proceedings of TACAS}},
  \bibinfo{series}{LNCS 3920}, \bibinfo{publisher}{Springer, Berlin,
  Heidelberg}, pp. \bibinfo{pages}{441--444}, \doi{10.1007/11691372\_29}.

\bibitemdeclare{article}{Beer06}
\bibitem{Beer06}
\bibinfo{author}{\surnamestart I.Beer\surnameend},
  \bibinfo{author}{S.~\surnamestart Ben-David\surnameend},
  \bibinfo{author}{H.~\surnamestart Chockler\surnameend},
  \bibinfo{author}{A.~\surnamestart Orni\surnameend} \&
  \bibinfo{author}{R.~\surnamestart Treer\surnameend} (\bibinfo{year}{2012}):
  \emph{\bibinfo{title}{Explaining counterexamples using causality}}.
\newblock {\sl \bibinfo{journal}{Formal Methods Systems Design}}
  \bibinfo{volume}{40}(\bibinfo{number}{1}), pp. \bibinfo{pages}{20--40},
  \doi{10.1007/s10703-011-0132-2}.

\bibitemdeclare{inproceedings}{Jansen19}
\bibitem{Jansen19}
\bibinfo{author}{N.~\surnamestart Jansen\surnameend},
  \bibinfo{author}{E.~\surnamestart Abraham\surnameend},
  \bibinfo{author}{M.~\surnamestart Volk\surnameend},
  \bibinfo{author}{R.~\surnamestart Wilmer\surnameend}, \bibinfo{author}{J.P
  \surnamestart Katoen\surnameend} \& \bibinfo{author}{B.~\surnamestart
  Becker\surnameend} (\bibinfo{year}{2012}): \emph{\bibinfo{title}{Proceedings
  of The COMICS Tool - Computing Minimal Counterexamples for DTMCs}}.
\newblock In: {\sl \bibinfo{booktitle}{Proceedings of ATVA}},
  \bibinfo{series}{LNCS 7561}, \bibinfo{publisher}{Springer, Berlin,
  Heidelberg}, pp. \bibinfo{pages}{249--253},
  \doi{10.1007/978-3-642-33386-6\_27}.

\bibitemdeclare{inproceedings}{Katoen16}
\bibitem{Katoen16}
\bibinfo{author}{J.-P. \surnamestart Katoen\surnameend},
  \bibinfo{author}{M.~\surnamestart Khattri\surnameend} \&
  \bibinfo{author}{I.~S. \surnamestart Zapreev\surnameend}
  (\bibinfo{year}{2005}): \emph{\bibinfo{title}{A Markov Reward Model
  Checker}}.
\newblock In: {\sl \bibinfo{booktitle}{Proceedings of QEST}}, pp.
  \bibinfo{pages}{243--244}, \doi{10.1109/QEST.2005.2}.

\bibitemdeclare{inproceedings}{Leitner125}
\bibitem{Leitner125}
\bibinfo{author}{F.~\surnamestart Leitner-Fischer\surnameend} \&
  \bibinfo{author}{S.~\surnamestart Leue\surnameend} (\bibinfo{year}{2013}):
  \emph{\bibinfo{title}{On the Synergy of Probabilistic Causality Computation
  and Causality Checking}}.
\newblock In: {\sl \bibinfo{booktitle}{Proceedings of SPIN 2013}},
  \bibinfo{series}{LNCS 7976}, \bibinfo{publisher}{Springer-Verlag, Berlin,
  Heidelberg}, pp. \bibinfo{pages}{246--263},
  \doi{10.1007/978-3-642-39176-7\_16}.

\bibitemdeclare{article}{Leitner-Fischer2013Proba-26504}
\bibitem{Leitner-Fischer2013Proba-26504}
\bibinfo{author}{F.~\surnamestart Leitner-Fischer\surnameend} \&
  \bibinfo{author}{S.~\surnamestart Leue\surnameend} (\bibinfo{year}{2013}):
  \emph{\bibinfo{title}{Probabilistic fault tree synthesis using causality
  computation}}.
\newblock {\sl \bibinfo{journal}{International Journal of Critical
  Computer-Based Systems}} \bibinfo{volume}{4}(\bibinfo{number}{2}), pp.
  \bibinfo{pages}{119--143}, \doi{10.1504/IJCCBS.2013.056492}.

\bibitemdeclare{inproceedings}{Wimmer18}
\bibitem{Wimmer18}
\bibinfo{author}{R.~\surnamestart Wimmer\surnameend},
  \bibinfo{author}{N.~\surnamestart Jansen\surnameend},
  \bibinfo{author}{E.~\surnamestart Abraham\surnameend},
  \bibinfo{author}{B.~\surnamestart Becker\surnameend} \& \bibinfo{author}{J.P.
  \surnamestart Katoen\surnameend} (\bibinfo{year}{2012}):
  \emph{\bibinfo{title}{Minimal Critical Subsystems for Discrete-Time Markov
  Models}}.
\newblock In: {\sl \bibinfo{booktitle}{Proceedings of TACAS}},
  \bibinfo{series}{LNCS 7214}, \bibinfo{publisher}{Springer, Berlin,
  Heidelberg}, pp. \bibinfo{pages}{299--314},
  \doi{10.1007/978-3-642-28756-5\_21}.

\bibitemdeclare{inproceedings}{Wimmer21}
\bibitem{Wimmer21}
\bibinfo{author}{R.~\surnamestart Wimmer\surnameend},
  \bibinfo{author}{N.~\surnamestart Jansen\surnameend} \&
  \bibinfo{author}{A.~\surnamestart Vorpahl\surnameend} (\bibinfo{year}{2013}):
  \emph{\bibinfo{title}{High-Level Counterexamples for Probabilistic
  Automata}}.
\newblock In: {\sl \bibinfo{booktitle}{Proceedings of Quantitative Evaluation
  of Systems (QEST)}}, \bibinfo{series}{LNCS 8054},
  \bibinfo{publisher}{Springer, Berlin, Heidelberg}, pp.
  \bibinfo{pages}{39--54}, \doi{10.1007/978-3-642-40196-1\_4}.

\end{thebibliography}
\end{document}